\newcommand*\samethanks[1][\value{footnote}]{\footnotemark[#1]}
\newcommand{\M}{{\cal M}}
\newcommand{\PP}{{\cal P}}
\newtheorem{theorem}{Theorem}
\newtheorem{lemma}[theorem]{Lemma}
\newtheorem{algorithm}{Algorithm}
\begin{document}
\title{Graph Partitioning Methods for Fast Parallel Quantum Molecular Dynamics}
\author{Hristo N.\ Djidjev\thanks{Los Alamos National Laboratory, Los Alamos, NM 87544.} 
\and Georg Hahn\thanks{Department of Statistics, Columbia University, New York, NY 10027.}
\thanks{Part of work was done when authors were visiting Los Alamos National Laboratory.} 
\and Susan M.\ Mniszewski~\thanks{Los Alamos National Laboratory, Los Alamos, NM 87544.}
\and Christian F.A.\ Negre~\thanks{Los Alamos National Laboratory, Los Alamos, NM 87544.}
\and Anders M.N.\ Niklasson~\thanks{Los Alamos National Laboratory, Los Alamos, NM 87544.} 
\and Vivek B.\ Sardeshmukh\samethanks[4]~\thanks{Department of Computer Science, University of Iowa, Iowa City, IA 52242.}
}
\date{}
\maketitle

\begin{abstract}
We study a graph partitioning problem motivated by the simulation of the 
physical movement of multi-body systems on an atomistic
level, where the forces are calculated from a quantum mechanical description of the electrons.
Several advanced algorithms have been published in the literature for such 
simulations that are based on evaluations of matrix polynomials.
We aim at efficiently parallelizing these computations by using
a special type of graph partitioning.
For this, we represent the zero-nonzero structure of a thresholded matrix as a graph and
partition that graph into several components.
The matrix polynomial is then evaluated for each separate submatrix corresponding to the subgraphs and the
evaluated submatrix polynomials are used to assemble the final 
result for the full matrix polynomial.
The paper provides a rigorous definition as well as a mathematical justification 
of this partitioning problem.
We use several algorithms to compute graph partitions and experimentally 
evaluate
their performance with respect to the quality of the partition obtained with 
each method and the time needed to produce it.
\end{abstract}

\section{Introduction}
\label{section_intro}

Molecular dynamics (MD) simulations study the physical movements of multi-body systems on an atomistic level.
The interatomic movements take place at the femtosecond ($10^{-15}$ second) time scale and are integrated in a simulation
for a pre-specified period of time, typically in the pico to nanosecond ($10^{-12}$ to $10^{-9}$ second) range.
In quantum-based molecular dynamics (QMD) simulations, the interatomic forces are calculated 
from an underlying quantum mechanical description of the electronic structure. 

Several QMD methods are available for a variety of materials systems.
The main computational effort in the density functional based self-consistent tight-binding theory \citep{Elstner1998}, 
one of the most efficient and widely used approaches, is
the diagonalization of a matrix, the so-called \textit{Hamiltonian matrix}.
The Hamiltonian matrix encodes the electronic energy of the system
which is needed in order to construct the \textit{density matrix},
which describes the electronic structure of the system.
The self-consistent construction of the density marix is performed at each time step in a QMD simulation,
prior to each force evaluation. 
Using diagonalization, the construction of the density matrix requires a runtime of $O(N^3)$, where
$N$ is the dimension of the Hamiltonian, and is only suitable for systems of small size.
A number of reduced complexity algorithms have been developed during the
last two decades with a runtime which scales only linearly, $O(N)$, with the system size.

One such recent approach was proposed in \citep{Niklasson2002} and relies on a recursive polynomial expansion representation of the density matrix.
In contrast to diagonalization methods, the sparse-matrix second-order spectral projection (SM-SP2) algorithm scales 
linearly with the system size for non-metallic systems and competes or outperforms 
regular diagonalization schemes in both speed and accuracy using dense or sparse matrices \citep{Mniszewski2015}. 
In this method, the density matrix $D$ is computed from the Hamiltonian $H$ using the formula
\begin{equation}
D=\lim\limits_{i \rightarrow \infty } f_i[f_{i-1}[\dots f_0[X_0]\dots]], \label{eq:expansion}
\end{equation}
where the initial matrix $X_0$ is a linearly modified version of $H$ and $f_i(X_i)$ is a quadratic function
(either $X_i^2$ or $2X_i-X_i^2$, depending on the value of the trace of $X_i$ or $X_{i+1}$).
It is usually sufficient to perform no more than $20-30$ iterations in order to obtain a close approximation of $D$.
In order to reduce the computational complexity, thresholding is applied,
where small nonzero elements of the matrix
(typically between $10^{-5}$ to $10^{-7}$)
are replaced by zeroes.

The computational cost of the SM-SP2 algorithm is dominated by the cost of computing a matrix polynomial $P$,
determined by the cost of squaring sparse matrices.
In order to keep the wall-clock time low for large systems, given the large number ($10^4-10^6$) of time steps  needed in a typical QMD simulation, it is
necessary to parallelize the evaluation of the matrix polynomials.
Several parallel algorithms to achieve linear scaling complexity
in each individual matrix-matrix operation based on thresholded
blocked sparse matrix algebra have recently been proposed
\citep{Bock,Hutter,Mniszewski2015,VandeVondele}.
Our graph-based approach provides an alternative formulation that reduces communication overhead and allows scalable parallelism.

In this paper we focus on the computational aspects of an alternative approach for evaluating the matrix polynomial, denoted G-SP2,
which parallelizes the SP2 algorithm using a decomposition of a graph representation of the density matrix into partitions.
(See \citep{Niklasson2016} for a discussion of the physics aspects of the approach.)
We represent the Hamiltonian (or density) matrix as a graph
that models the zero/nonzero structure of the matrix.
Our approach works by
dividing the graph into parts/partitions in such a way that a suitable cost function is minimized.

For this approach to produce accurate results, it is necessary to consider partitions with overlapping parts (or \textit{halos}).  
Clearly, the greater the overlap, the greater the computational overhead will be.
In order to minimize the overall 
computational cost, the aim of this work is thus to investigate partitioning schemes which,
in contrast to traditional approaches minimizing edge cuts,
attempt to minimize the cost of the corresponding polynomial evaluation, which is related to the sizes of the halos.
We will formalize those objectives as a graph partitioning problem and experimentally study several algorithms for its solution.

The goal of our partitioning approach is to enable fast calculation of the density matrix.
Thresholding of the matrix elements as mentioned above is required, justified by the physics of the underlying application;
without it the resulting matrix would quickly become dense due to fill-in.
Communication between processors can be avoided after each iteration in \eqref{eq:expansion} until the entire polynomial is evaluated.
Our approach achieves this: the initial matrix is partitioned and distributed across processors,
then each processor computes independently the polynomials on its assigned submatrices,
and the resulting submatrices are used to assemble the final output.

\section{Partitioned Evaluation of Matrix Polynomials}
\label{section_theory}

In this section we propose an algorithm for a partitioned evaluation of a matrix polynomial and define the cost function for the corresponding graph partitioning problem. (The appendix contains more details and the proofs.)

Let, for any symmetric matrix $X=\{x_{ij}\}$, $G(X)$ denote the graph, called \textit{sparsity graph} of $X$, that encodes the zero-nonzero structure of $X$. Specifically, for the $i$-th row (column) of $X$ there is a vertex $i$ in $G(X)$, and there is an edge between vertices $i$ and $j$ if and only if $x_{ij}\neq 0$.  
Let $A$ be a symmetric $n \times n$ matrix.
We define a generalization of a matrix polynomial in formula \eqref{eq:expansion}.
We define a \textit{thresholded matrix polynomial} of degree $m=2^s$ to be a superposition of operators of the type
\begin{equation}\label{eq:poly2}
P=P_1 \circ T_1 \circ\dots \circ P_s \circ T_s,
\end{equation}
where $P_i$ is a polynomial of degree 2 and $T_i$ is a thresholding operation. Formally, $T_i$ is a graph operator associated with a set of edges $E(T_i)$ such that, for any graph $I$, $T_i(I)$ is a graph with a vertex set $V(I)$ and an edge set $E(I)\setminus T_i$. 

Denote by $P(A)$ the application of a superpositioned operator $P$ as defined above,
consisting of polynomials $P_i$ and thresholding operations $T_i$,
to a matrix $A$ of appropriate dimension.
In the motivating SM-SP2 application, $A$ corresponds to the Hamiltonian and $P(A)$ corresponds to the density matrix.

Let $\PP(G)$ describe the worst-case zero-nonzero structure of $P(A)$ that ignores the possibility of coincidental zeros resulting from cancellation (adding opposite-sign numbers). We assume that all diagonal elements of $A$ are non-zero and that no $E(T_i)$ contains a loop edge.

Let $\Pi=\{\Pi_1,\dots,\Pi_q\}$, where $\Pi_i$ is a union of vertex sets $U_i$ called \textit{core} and $W_i$ called \textit{halo}. We call $\Pi$ a \textit{CH-partition} (or core-halo partition), if the following conditions are satisfied:
\begin{enumerate}[(i)]
\item $\bigcup_iU_i=V(G),\;U_i\cap U_j=\emptyset \mbox{ if $i\neq j$}$;
\item $W_i$ is the set of all neighbors of vertices in $U_i$ that are not in $U_i$.
\end{enumerate}

Let $H=\PP(G)$ and denote by $H_{U_i}$ the subgraph of $H$ induced by all neighbors of $U_i$ in $H$. Denote by $A_{U_i}$ the submatrix of $A$ consisting of all rows and columns that correspond to vertices of $V(H_{U_i})$.
The following main result of this section shows that $P(A)$ can be computed on submatrices of
the Hamiltonian
and hence justifies the parallelized evaluation of a matrix polynomial. One can prove the following.

\begin{lemma}\label{lem:matrixValuesMulti2}
For any $v\in U_i$ and any neighbor $w$ of $v$ in $\PP(G)$, the element of $P(A)$ corresponding to edge $(v,w)$ of $\PP(G)$ is equal to the element of $P(A_{U_i})$ corresponding to edge $(v,w)$ of $H_i$.
\end{lemma}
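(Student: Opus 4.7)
The plan is to prove the lemma by induction on $s$, tracking how matrix entries propagate through the iterated squaring--thresholding steps.

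Setup. Let $X^{(0)}=A$ and $X^{(j)}=P_{s-j+1}(T_{s-j+1}(X^{(j-1)}))$ for $j\ge 1$, so that $X^{(s)}=P(A)$; define $Y^{(j)}$ analogously with $Y^{(0)}=A_{U_i}$, and let $G^{(j)}$ denote the worst-case sparsity graph of $X^{(j)}$, so $G^{(0)}=G$ and $G^{(s)}=H$. The base case $s=0$ is immediate: $P$ is the identity and $A, A_{U_i}$ agree on all entries indexed by $V(H_{U_i})$.

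For the inductive step, expand $(P(A))_{vw}$ using the outermost operations $T_1$ and $P_1$. Writing $P_1(X)=aX^2+bX+cI$, we have
\[
(P(A))_{vw}=a\sum_k (T_1(X^{(s-1)}))_{vk}\,(T_1(X^{(s-1)}))_{kw}+b\,(T_1(X^{(s-1)}))_{vw}+c\,\delta_{vw},
\]
and the analogous expansion for $(P(A_{U_i}))_{vw}$ in terms of $Y^{(s-1)}$. Equality of the two reduces to two claims: (a) every index $k$ yielding a nonzero product above lies in $V(H_{U_i})$; and (b) at the positions relevant to $v$ and $w$, the entries of $T_1(X^{(s-1)})$ and $T_1(Y^{(s-1)})$ coincide. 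Claim (b) is essentially the inductive hypothesis applied to the shorter thresholded polynomial $P'=P_2\circ T_2\circ\dots\circ P_s\circ T_s$ of degree $2^{s-1}$, after verifying that the given CH-partition together with the halo $V(H_{U_i})$ remains admissible for $P'$. Claim (a) uses the construction of $H$: the edge $(v,w)\in E(H)$ appears only because the final squaring step finds a common neighbor $k$ of $v$ and $w$ in $T_1(G^{(s-1)})$, and any such $k$ is then $H$-adjacent to $v$, hence lies in $V(H_{U_i})$.

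The main obstacle is making claim (a) propagate robustly through all $s$ levels. The intermediate graphs $G^{(j)}$ are not necessarily subgraphs of $H$ --- squaring can introduce edges that later vanish under thresholding --- so a naive ``stay within the halo'' induction does not close on its own. The right strategy is to argue backward from the target edge $(v,w)$ of $H$: by tracing the dependency tree of $(P(A))_{vw}$ down through the $s$ squarings, one verifies that every summation index encountered at every depth remains within $V(H_{U_i})$. The halo's definition in terms of $H$-neighbors, together with the assumption that diagonals of $A$ are nonzero and that no $E(T_i)$ contains a loop edge (so diagonal entries survive every iteration and the squaring graph always contains every vertex's self-loop), is precisely what is needed to close this reachability argument.
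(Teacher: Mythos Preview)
Your approach is essentially the paper's. The paper first establishes the single-core version (its Lemma~3) by the same induction on $s$: it peels off the outermost $P\circ T$, writes the squared entry as a sum over intermediate indices $u$, invokes a separate neighborhood lemma (its Lemma~2) to conclude that all contributing $u$ lie in $N(v,H)$, and then appeals to the inductive hypothesis to match the individual factors. The multi-core statement (the lemma you are proving) is then declared to follow ``in a straightforward manner.'' Your claims~(a) and~(b) are exactly these two ingredients, just stated directly for the multi-core case rather than routed through an auxiliary lemma.

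Where you go further than the paper is in flagging the two places where the induction is delicate: that the inductive hypothesis for $P'$ is formally a statement about the halo determined by $\PP'(G)$ rather than $\PP(G)$, and that intermediate sparsity graphs need not sit inside $H$. The paper's proof simply asserts that the summation in $A'^2(v,w)$ ranges over $u$ with $(v,u),(u,w)\in E(\PP(G))$ and that the inductive hypothesis gives $A'(v,u)=X'(v',u')$ and $A'(u,w)=X'(u',w')$, without addressing either point; your dependency-tree/backward-reachability argument is the natural way to make this honest. Note in particular that the paper's inductive hypothesis only speaks about row $v$ (the core vertex), yet the step also needs $A'(u,w)=X'(u',w')$ for $u$ in the halo --- so your instinct that something stronger must be carried through the induction is correct, and your ``every summation index at every depth stays in $V(H_{U_i})$'' is the right formulation of that stronger statement.
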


When the matrix $A$ is a Hamiltonian matrix used in a QMD simulation, we can assume that $P(A)$ will have a sparsity structure very similar to the structure of the density matrix $D$ from the previous QMD simulation step.
When computing the halos we can then use $G(D)$ instead of $H=\PP(G)$, since the latter graph is not known until $P(A)$ has been computed.
In practice, the current $H$ can also be used to contribute to the halo.

Using $H=G(D)$,
this allows us to construct the following  algorithm for computing $P(A)$:

\begin{enumerate}[(i)]
  \item Divide $V(G)$ into $q$ disjoint sets $\{U_1,\dots,U_q\}$ and define a CH-partition $\Pi=\{\Pi_1,\dots,\Pi_q\}$, where $\Pi_i$ has core $U_i$ and halo $N(U_i,H)\setminus U_i$;
  \item Construct submatrices $A_{U_i}$, $i=1,\dots,q$;
  \item Compute $\PP(A_{U_i})$ for all $i$ independently, using dense matrix algebra;
  \item Define $\PP(A)$ as a matrix whose $i$-th row has nonzero elements equal to the corresponding elements of the $j$-th row of $\PP(A_{U_k})$, where $U_k$ is the set containing vertex $i$ and $j$ is the row in $A_{U_k}$ corresponding to the $i$-th row in $A$. Lemma~\ref{lem:matrixValuesMulti2} shows that such an algorithm computes $P(A)$ accurately.
\end{enumerate}

The computational complexity of computing $P(A)$ by the proposed algorithm is dominated by the complexity of step~(iii),
that is the computation of $\PP(A_{U_i})$ for all $i$, dominated by a dense matrix-matrix multiplication.

If $c_i$ and $h_i$ are the size of the core and the halo of $\Pi_i$, then, by formula \eqref{eq:poly2},
computing $\PP(A_{U_i})$ takes $s(c_i+h_i)^3$ operations,
where $s$ is the number of superpositioned operators (see formula \eqref{eq:poly2}).
Here we exclude those operations needed to threshold some elements of the matrices
(which are quadratic in the worst case, but usually linear in $c_i+h_i$).
Since the parameter $s$ is independent of $\Pi$, a CH-partition that minimizes the computational cost of computing $P(A)$ will minimize $\sum_{i=1}^q (c_i+h_i)^3$. 

Hence, we will be looking at the following
\textit{CH-partitioning problem}. Let $G$ be an undirected graph and  $q\geq 2$ be an integer.
Find a CH-partition of $G$ into $q$ parts $\Pi_1,\dots,\Pi_q$, where $\Pi_i$ has core $U_i$ and halo $N(U_i,G)\setminus U_i$ of sizes $c_i$ and $h_i$, respectively, that minimizes

\begin{equation}
\label{eq:sumOfCubes}
\sum_{i=1}^q (c_i+h_i)^3.
\end{equation}

\section{Proposed Partitioning Algorithms}
\label{section_algorithms}

We investigate several approaches to compute CH-partitions with the aim of minimizing the objective function \eqref{eq:sumOfCubes}
using existing graph partitioning packages as well as our own heuristic algorithm. 
We chose \textit{METIS} and \textit{hMETIS} for our experiments due to their widespread use and \textit{KaHIP} because of its good performance demonstrated at the 10th DIMACS Implementation Challenge~\cite{DBLP:conf/dimacs/2012}.

\subsection{Using Standard Graph Partitioning}
Observe that if we were to take sums without the cubes in the objective function \eqref{eq:sumOfCubes},
we would obtain $|V(G)| + \sum_i h_i$. 
In other words, if we ignore the cubes, we will need to minimize the sum of the halo nodes over all parts. 

One can easily establish a correspondence between regular graph partitions and CH-partitions.
Given a regular partition $P$, we can define a CH-partition $\Pi$ that has cores corresponding to the parts of $P$ and halos defined as the adjacent vertices of the corresponding core vertices,
but not in those cores themselves.
Then, if $(v,w)$ is a cut edge of $P$, one of $v$ or $w$ is a halo vertex. Conversely, if $v$ is a halo vertex of some part in $\Pi$, then there exists a core vertex  $w$ such that $(v,w)$ is a cut edge. So, clearly, the set of the cut edges of $P$ and the set of halo nodes in $\Pi$ are related, but they are clearly also different. However, tools like \textit{METIS} allow  also optimization with respect to the \textit{total communication volume}, which exactly corresponds to the sum of halo nodes. 
Hence, we want to study, by ignoring the cubes in \eqref{eq:sumOfCubes}, how well regular graph partitioning tools can be used to produce CH-partitions.
As an alternative approach, we propose later in this section a heuristic to improve the solution obtained by standard graph partitioning tools.
We consider the following approaches:

\subsubsection{METIS}
\label{subsection_metis}
\textit{METIS} \citep{KarypisKumar1999} is a popular heuristic multilevel algorithm to perform graph partitioning based on a three-phase approach:
(i) The input graph is coarsened by generating a sequence of graphs $G_0, G_1, \ldots, G_n$ starting from the original graph $G=G_0$ and ending with a suitably small graph $G_n$ (typically less than 100 vertices).
(ii) $G_n$ is partitioned using some other algorithm of choice.
(iii) The partition is projected back from $G_n$ to $G_0$ through $G_{n-1},\dots,G_1$.
As each of the finer graphs during the uncoarsening phase contains more degrees of freedom than the multinode graph, 
a refinement algorithm such as Fiduccia-Mattheyses'  \citep{FiducciaMattheyses1982} is used to enhance the partitioning after each projection.
\textit{METIS} has multiple tuning parameters including the size of $G_n$, the coarsening algorithm, and the algorithm used for partitioning $G_n$. 

\subsubsection{KaHIP}
\textit{KaHIP}~\cite{sandersschulz2013} is a family of graph partitioning programs, including several multilevel graph partitioning algorithms.
Like \textit{METIS}, \textit{KaHIP} contracts a given graph, computes partitions on each contraction level
and uses local search methods to improve a partitioning induced by the coarser level.
It offers various heuristics such as local improvements based on max-flow/min-cut
\citep{SandersSchulz2011,FordFulkerson1956},
repeatedly applied Fiduccia-Mattheyses calls \citep{FiducciaMattheyses1982} or F-cycles \citep{SandersSchulz2011}. 

\subsubsection{Using hypergraph partitioning}
\label{subsection_hmetis}
In the hypergraph formulation, the set of all neighbor vertices of each vertex is defined as the single hyperedge corresponding to that vertex.
Using hyperedges has the advantage that, by minimizing edge-cut with respect to hyperedges, either all vertices or no vertex
for a particular set of neighbor nodes are included in a partition,
thus taking care of the halo nodes by itself.

For hypergraph partitioning we use \textit{hMETIS} \cite{KarypisKumar2000}, which is the hypergraph partitioning analog of \textit{METIS}.

\subsection{Simulated Annealing Refinement}
\label{subsection_SA}
Since standard graph and hypergraph partitioning algorithms use an objective function
(the size of the edge or hyperedge cut) which differs from the objective function \eqref{eq:sumOfCubes}, we designed an algorithm that
explicitly minimizes  \eqref{eq:sumOfCubes}.

The simulated annealing (\textit{SA}) optimization approach
\citep{Kirkpatrick1983} is a standard (probabilistic) tool in optimization.
SA is a gradient-free method that
iteratively improves an objective function by proposing a sequence of random 
modifications to an existing solution.
Modifications that minimize the current best solution are always accepted,
while all other moves may be accepted with a certain \textit{acceptance probability}
which depends on two quantities:
first, the acceptance probability of any move is proportional to the magnitude of the (unfavorable) increase in the objective function
resulting from the proposed modification, and second it is antiproportional
to the runtime, meaning that modifications are more likely to be accepted at the start of
each run (the exploration phase of the SA algorithm).
The latter property is implemented with the help of a so-called
\textit{temperature} function which is decreased after each iteration.

Our implementation
(the pseudo-code is given as Algorithm \ref{algorithm_sa} in Appendix \ref{section_details_SA}) 
starts with a CH-partition $\Pi$, which is either randomly 
generated, or produced by another algorithm (for instance by \textit{METIS} or any other partitioning tool).
At the $i$-th iteration, a random edge joining a core vertex $v$ and a halo 
vertex $w$ from the same part  $\Pi_i$ of $\Pi$ is  randomly chosen.
Next, a partition $\Pi'$ is created out of $\Pi$ by moving $w$ from the halo to 
the core of $\pi$.
The proposed modified partition $\Pi'$ is evaluated by computing the change 
$\Delta$ in the objective function \eqref{eq:sumOfCubes} between $\Pi'$ and 
$\Pi$.
The modification $\Pi'$ is accepted with probability $p$, where $p=1$ if 
$\Delta<0$, i.e., $\Pi'$ is better than $\Pi$, and $p<1$ if $\Delta\geq 0$.
Specifically, in the latter case, we set the probability of accepting a modification that increases
the objective function  to
$p=\exp \left(-\Delta/t(i) \right)$.
If the modification is accepted, then we set $\Pi=\Pi'$ and repeat the iteration.

We choose $t(i)=1/i$ as the temperature function.
The stopping criterion of our simulated annealing method was the maximal number of iterations, set to $N=100$.

\section{Summary of Experimental Results}
\label{section_results}

\label{subsection_expresults_goals}
In this section we compare the quality of the CH-partitions computed by the algorithms presented in Section \ref{section_algorithms}
with respect to three measures:
the magnitude of the objective function (\ref{eq:sumOfCubes}),
the time needed to compute the partitioning, and
the scaling behavior of the G-SP2 algorithm for one of these systems
as a function of both the number of MPI ranks and threads.
Instead of using simulated (random) graphs as test systems,
we derive graphs from representations of actual molecules.

\subsection{Choice of Parameters and Experimental Setup}
\label{subsection_parameters}
We employed \textit{METIS} and \textit{hMETIS} with the following parameters obtained through a grid search over sensible values.
The best set of parameters (where ``best'' is understood as the best performance
on average for all systems considered in this section) provided below is kept fixed throughout the section.

\textit{METIS} was run using the default multilevel $k$-way partitioning and the default
sorted heavy-edge matching for coarsening the graph.
The $k$-way partitioning routine of \textit{METIS} allows the user to choose to minimize either with respect to the
\textit{edge-cut} or the \textit{total communication volume} of the partitioning. 
In particular, the definition of the ``total communication volume'' of \textit{METIS}
precisely corresponds to what we defined as the ``sum of halo nodes'' (Section \ref{section_theory}). 
Therefore, it makes sense to choose to minimize with respect to the total communication volume. 

For \textit{hMETIS}, we found the recursive bisectioning routine to perform better than the $k$-way partitioning
when using the following parameters
(see the \textit{hMetis} manual~\cite{hmetis_manual} for the meaning of these parameters):
the vertex grouping scheme \textit{Ctype} was set to $1$ (hybrid first-choice scheme HFC),
the refinement heuristic was set to $1$ (Fiduccia-Mattheyses), and the $V$-cycle refinement scheme
was set to $3$ to perform a $V$-cycle refinement on each intermediate solution.

We tested all sequential implementations on a MacBook Pro laptop running OS X Yosemite. 
Our implementations were written in $C$ and our executables were
compiled for a 64-bit architecture.

\subsection{Test cases motivated by physical systems}
\label{subsection_systems}

We evaluate all algorithms considered in this study (Section \ref{section_algorithms}) on a variety of physical test systems.
These systems are chosen in such a way as to cover a representative set of realistic scenarios where graph partitioning
can be applied to MD simulations.
We give insights into the physics of each test system
that gave rise to the density matrix
and how the structure of the graph affects the results (Section \ref{subsection_assessment_all}).

\begin{figure}
  \begin{center}
   \includegraphics[width=0.27\textwidth]{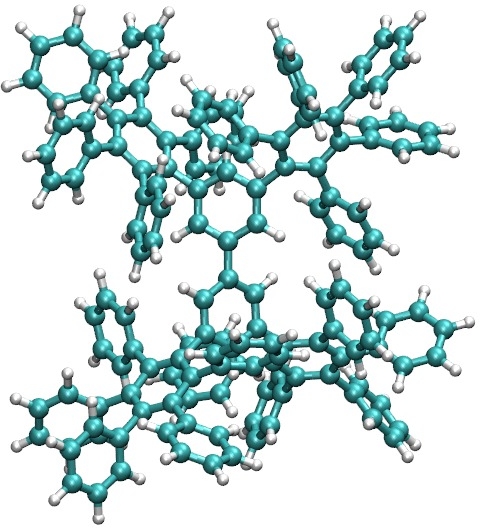}
   \includegraphics[width=0.32\textwidth]{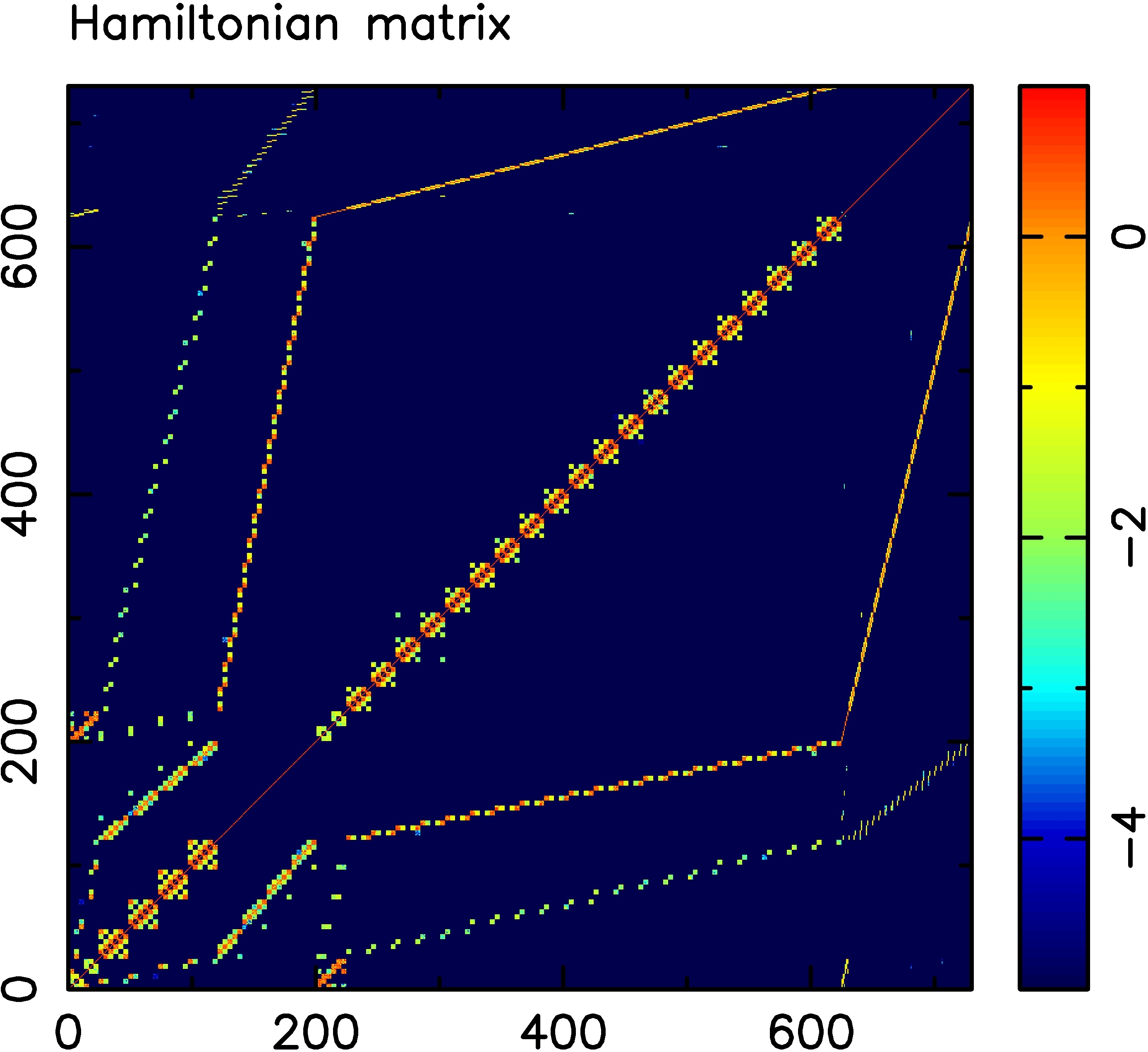}
   \includegraphics[width=0.32\textwidth]{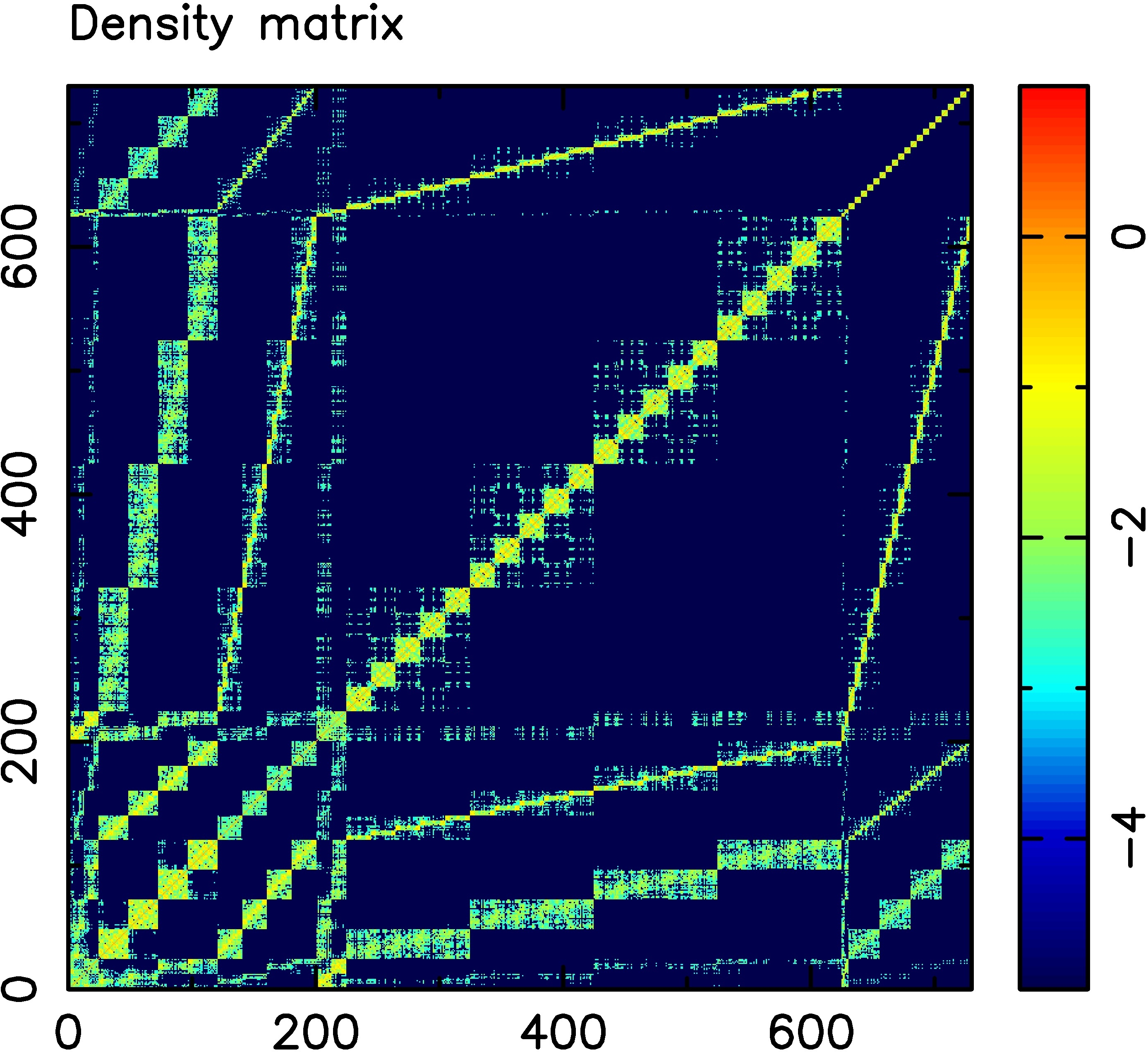}
   \caption{\small Phenyl dendrimer system with its molecular representation (left).
   Carbon and hydrogen atoms are depicted in cyan and white spheres respectively.
   2D plot representation of the Hamiltonian (middle) and thresholded density matrix (right) respectively.
   Here we have plotted the Log$_{10}$ of the absolute values of the elements for each of the matrices.
   The density matrix was computed using the SM-SP2 algorithm.\\[4ex]}
   \label{fig:dendrimer_molecule}
  \end{center} 
\end{figure}

As an example,
Figure \ref{fig:dendrimer_molecule} (left) shows a dendrimer molecule composed of
22 covalently bonded phenyl groups with C and H atoms only.
It has 262 atoms and 730 orbitals, meaning that the resulting graph consists of 730 vertices.

The Hamiltonian matrix (showing absolute values of elements) for this system is shown in Figure \ref{fig:dendrimer_molecule} (middle).
Applying the SM-SP2 algorithm to the Hamiltonian results in the density matrix on the right of Figure \ref{fig:dendrimer_molecule} which encodes its physical properties.

To convert the density matrix into a graph in order to find meaningful physical components via graph partitioning,
the density matrix is thresholded by $10^{-5}$.

We apply the above procedure to all the systems summarized in Table \ref{tab:systems}
in order to obtain adjacency matrices.
\begin{table}[t]
\scriptsize
\caption{Different physical systems used to evaluate different methods.
$n$ is the number of vertices in the graph and $m$ is the number of edges.\label{tab:systems}}
  \begin{center}
    \begin{tabular}{|l|l|r|r|r|l|}
    \hline
      No & Name & $n$ & $m$ & $m/n$ & Description \\ \hline
      1 & polyethylene dense crystal	& 18432	& 4112189	& 223.1	& crystal molecule in water solvent with very low threshold\\
      2 & polyethylene sparse crystal	& 18432	& 812343	& 44.1	& crystal molecule in water solvent with very high threshold\\
      3 & phenyl dendrimer 		& 730 	& 31147 	& 42.7	& polyphenylene branched molecule\\ 
      4 & polyalanine 189 		& 31941	& 1879751	& 58.9	& poly-alanine protein solvated in water\\
      5 & peptide 1aft 			& 385 	& 1833 		& 4.76	& ribonucleoside-diphosphate reductase protein \\
      6 & polyethylene chain 1024 	& 12288	& 290816 	& 23.7	& chain of polymer molecule, almost 1-d\\ 
      7 & polyalanine 289 		& 41185	& 1827256	& 44.4	& large protein in water solvent\\
      8 & peptide trp cage 		& 16863	& 176300	& 10.5	& smallest protein with ability to fold,\\
      &&&&&surrounded by water molecules\\
      9 & urea crystal			& 3584	& 109067	& 30.4	& organic compound involved in many processes\\
      &&&&&in living organisms  \\ \hline
  \end{tabular}
\end{center}
\end{table}
The table shows the name of the molecule system in the first column, together
with its number of vertices $n$ and edges $m$.

\subsection{Evaluation of the partitioning algorithms on a variety of real systems}
\label{subsection_assessment_all}
\begin{figure}[t]
  \centering
  \includegraphics[width=0.7\textwidth]{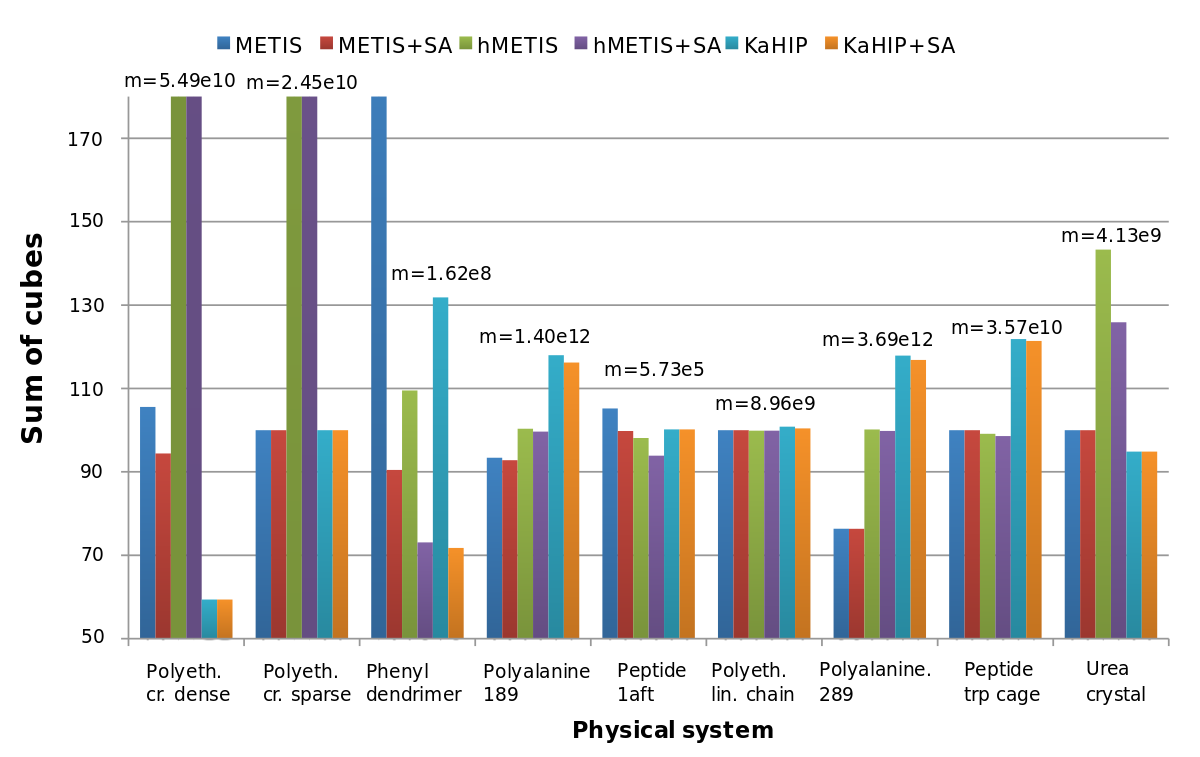}
  \caption{Performance of different methods on our test systems measured using the sum of cubes criterion.
  For each experiment, the values have been normalized by setting each median to 100.
  Bars representing very large values have been truncated to make the rest of the chart more informative.
  (The exact values are available in Appendix~\ref{app:experiments}, Table~\ref{tab:results}.) 
    \label{fig:charts} }
\end{figure}

\begin{figure}[t]
  \centering
  \includegraphics[width=0.7\textwidth]{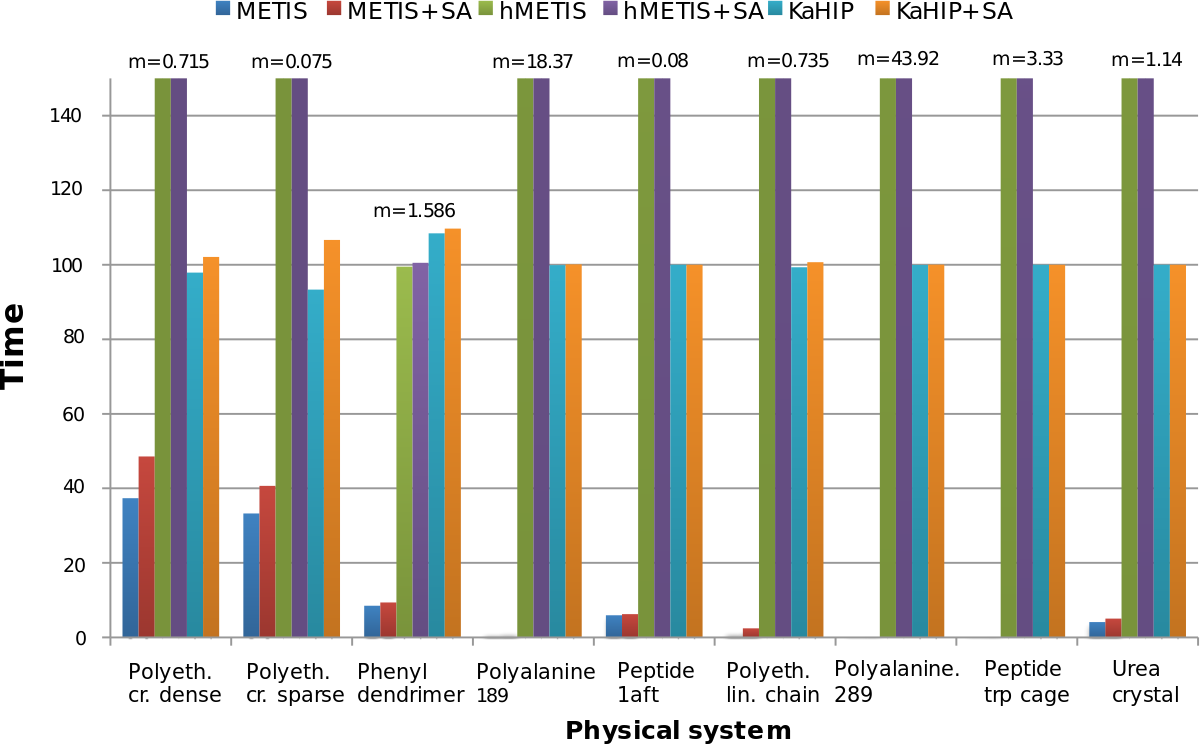}
  \caption{Performance of different methods on our test systems measured in terms of computing time for the partitioning.
  To handle the big disparity between values for different graphs, the same formatting as in Figure~\ref{fig:charts} have been used.
  \label{fig:charts2}}
\end{figure}

We partition each graph into 16 parts using six methods:
(i) \textit{METIS}  used with the parameters given in Section \ref{subsection_parameters};
(ii) \textit{METIS} followed by simulated annealing (SA);
(iii) \textit{hMETIS}; (iv) \textit{hMETIS} in combination with SA; (v)
\textit{KaHIP}; (vi) \textit{KaHIP} plus SA.
The effectiveness of the methods  are  evaluated using the sum of cubes \eqref{eq:sumOfCubes} criterion.

Results of our experiments are summarized in Figures~\ref{fig:charts} and~\ref{fig:charts2}. 
The immediate observation is, except for the first two systems, all algorithms perform  well,
even though \textit{METIS} and \textit{KaHIP} are considerably faster than \textit{hMETIS}.
The usage of SA as a post-processing step seems favorable as it is able to improve
the solution returned by existing methods considerably in almost all cases at negligible additional runtime.

\textit{hMETIS} seems to be somewhat unsuited for this flavor of the partitioning problem
as its solutions are usually worse in quality than those of the other two methods.
Also, its runtime greatly exceeds other methods, making it unsuited for QMD
simulations over longer time intervals, the ultimate goal of this work.
The explanation of the exceptionally bad behavior for the first two systems needs to be investigated further.

\textit{KaHIP} yields very good solutions in terms of the sum of cubes.
Nevertheless, it is outperformed by the usage of \textit{METIS} and a SA post-processing step,
which often yields considerably better results and, moreover, has a shorter combined computational runtime.

The sparsity of the graph for a physical system seems to be of importance for the behavior of the algorithms.
Whereas for denser systems \textit{METIS} outperforms \textit{hMETIS}, this is not the case for sparser ones.
Moreover, SA seems to be able to further improve solutions especially well in the more dense cases.
This can easily be explained as dense cases offer more possibilities to move and optimize edges after partitioning than sparse ones.
For example, the dendrimer is very dense considering its small number of vertices
and hence the combination \textit{METIS+SA} performs very well
(details given in Appendix~\ref{app:experiments}, Table~\ref{tab:results}).

\subsection{Parallel G-SP2}
\label{subsection_scaling_protein}
Instead of relying on the sum of cubes criterion (\ref{eq:sumOfCubes}) we also measured
the quality of the CH-partitions through the speed-up obtained when parallelizing the G-SP2 algorithm
and applying it to real physical systems.

\begin{figure}
  \includegraphics[width=0.49\textwidth]{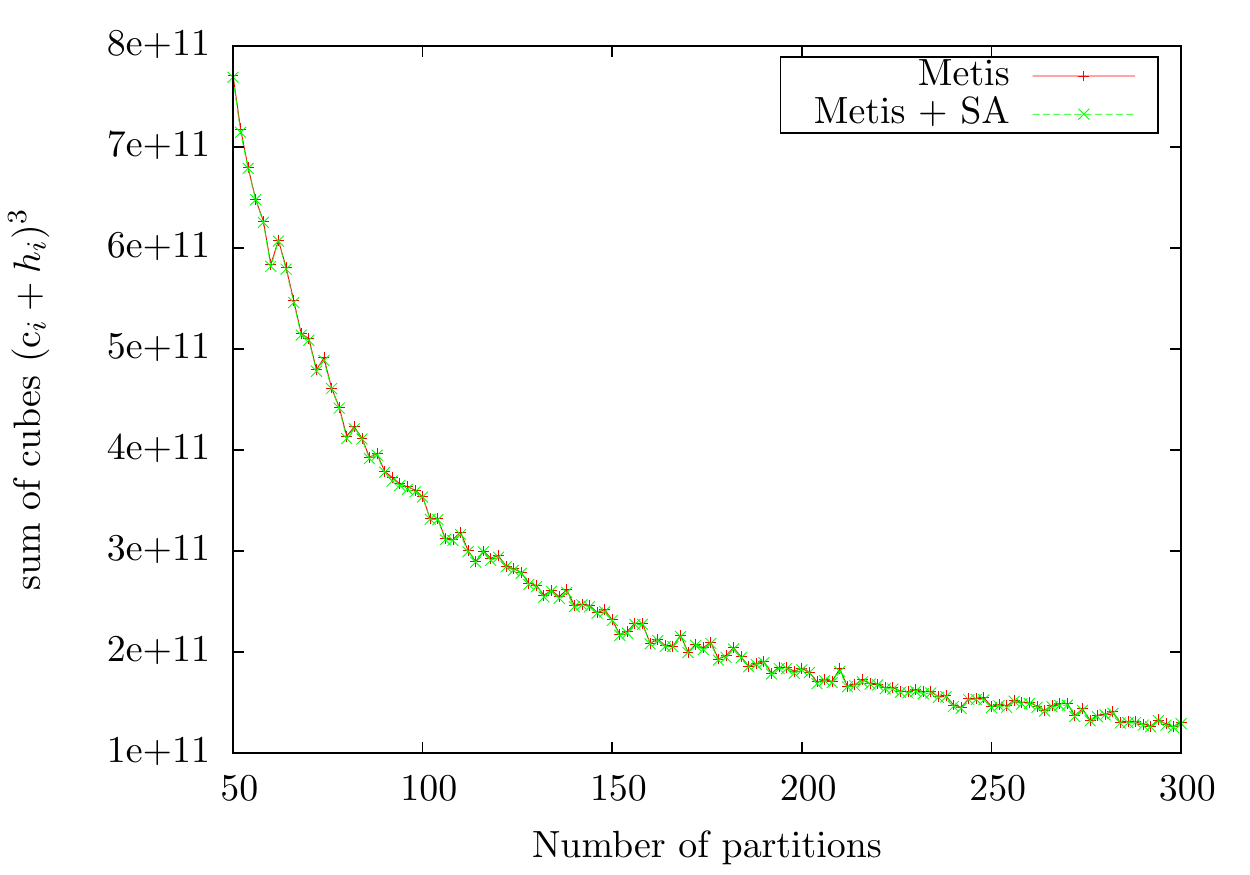} \hfill
  \includegraphics[width=0.49\textwidth]{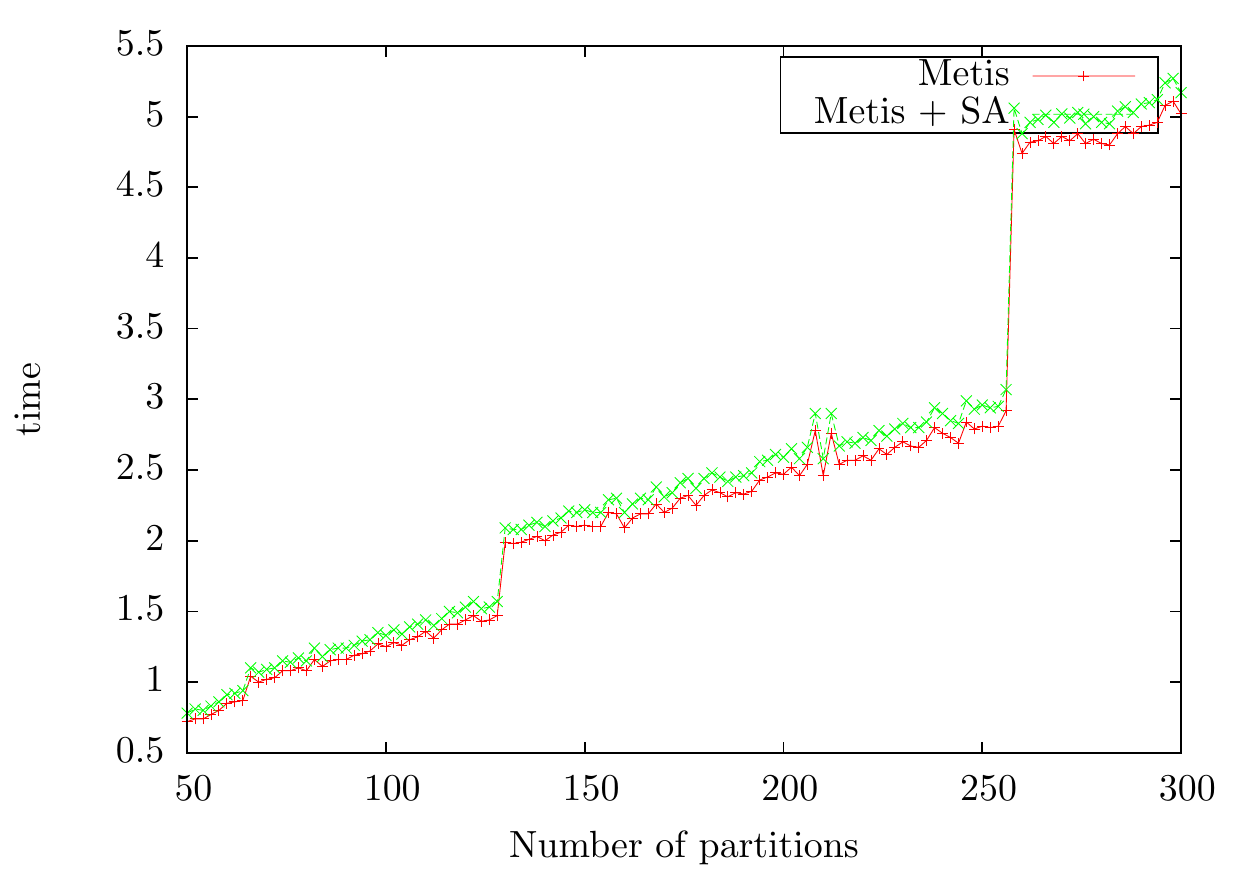} 
  \caption{Dependence of the number of parts on the  sum of cubes measure
  and the time to find the partition.\label{fig:protein_scale}}
\end{figure}

We used CH-partitions obtained by the \textit{METIS} and \textit{METIS+SA} methods on a large protein system
(labeled ``polyalanine 259'' in Table \ref{tab:systems} with $n=41,185$ vertices and $m=21,827,256$ edges).
The experiments were carried out on the Wolf IC cluster of Los Alamos National Laboratory. 
Each computing node has 2 sockets housing an 8-core Intel Xeon SandyBridge E5-2670 with a total of 16 cores per computing node. 
Parallelism was achieved by OpenMPI (for parallelism across nodes) and OpenMP (for parallelism across cores within a node).

Figure~\ref{fig:protein_scale} shows how the sum of cubes measure
and the computing time varies as we increase the number of CH-partitions for the large protein system.
According to the left plot, the total effort of the G-SP2 algorithm,
measured in terms of the sum of cubes criterion, decreases steadily as the number of partitions and parallelized subproblems
increases.

Corresponding to the left plot of Figure~\ref{fig:protein_scale},
the right plot shows the effort for the graph partitioning step alone, measured in terms of computing time.
As expected, the total effort increases with an increasing number of partitions.
Noticeable are the steps in the plot occurring at $65$, $129$, $257$ etc.\ partitions. These are due
to the fact that the multilevel implementation of \textit{METIS} bisects the partitioning problem
into one more (recursive) layer each time the number of partitions surpasses a power of two.
Moreover, Figure~\ref{fig:protein_scale} (right plot) demonstrates that the additional effort added by employing
the SA post-processing step is minimal in comparison to the actual graph partitioning.
Given the improvements achieved by post-applying SA to the edge-cut optimized partitions computed by a conventional algorithm,
its usage seems very sensible.
Overall, Figure~\ref{fig:protein_scale} visualizes that despite the increasing effort to compute a partitioning,
the total effort of the G-SP2 algorithm decreases in a parallelized application.

\subsection{Comparison of SM-SP2 on a single node to parallel G-SP2}
\label{subsection_scaling_prot}
\begin{figure}
\centering
\includegraphics[width=0.7\textwidth]{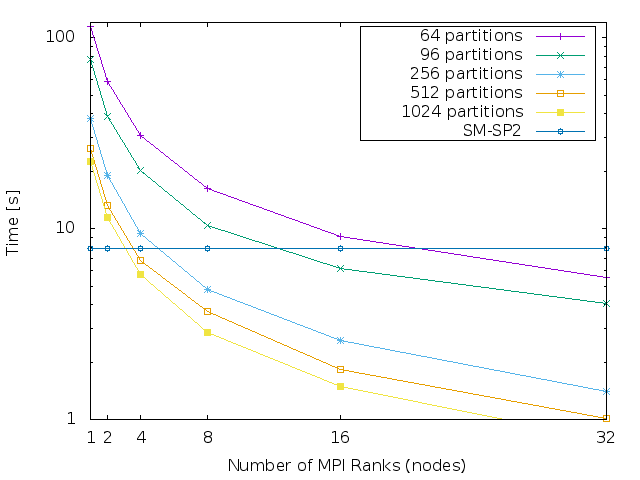}
\caption{Runtime against number of nodes for a parallelized G-SP2 using different
numbers of partitions for the polyalanine 259 molecule.
\label{fig:SueChristian}}
\end{figure}
Figure \ref{fig:SueChristian} shows the runtime for a parallel G-SP2 run across 1-32 nodes and
compares the runtime against a threaded single node implementation of SM-SP2 for the polyalanine 259.
We use a single node for SM-SP2 since, in a multi-node implementation, the communication overhead exceeds the gain offered by the extra computing power; which issue is the main motivation for developing G-SP2.
As before, we employed \textit{METIS} with parameters specified in Section \ref{subsection_parameters}
together with a SA post-processing step.

As visible from Figure \ref{fig:SueChristian} the G-SP2 runtime decreases both with the numbers of nodes
as well as with the numbers of partitions used.
This is as expected.
The decrease is most prominent when only few nodes are used for parallelization since then,
increasing the number of parallel nodes causes the runtime to drop sharply.
The curves somewhat flatten out for higher number of nodes.

Interestingly, for low numbers of nodes (up to between $4$ and $16$ nodes depending on the number of partitions used)
the overhead from the parallel G-SP2 computation causes the
partitioned run to be slower than the SM-SP2 computation on a single node.
As more nodes are used, the runtime decreases and falls below the one of a single node implementation:
Figure \ref{fig:SueChristian} shows that for this particular physical system
at least $4$ nodes need to be used to observe a speed-up in computation.

\section{Partitioning-system relationship}
\label{subsection_partitioning_relationship}
\begin{figure}[t]
  \begin{center}
   \includegraphics[width=0.7\textwidth]{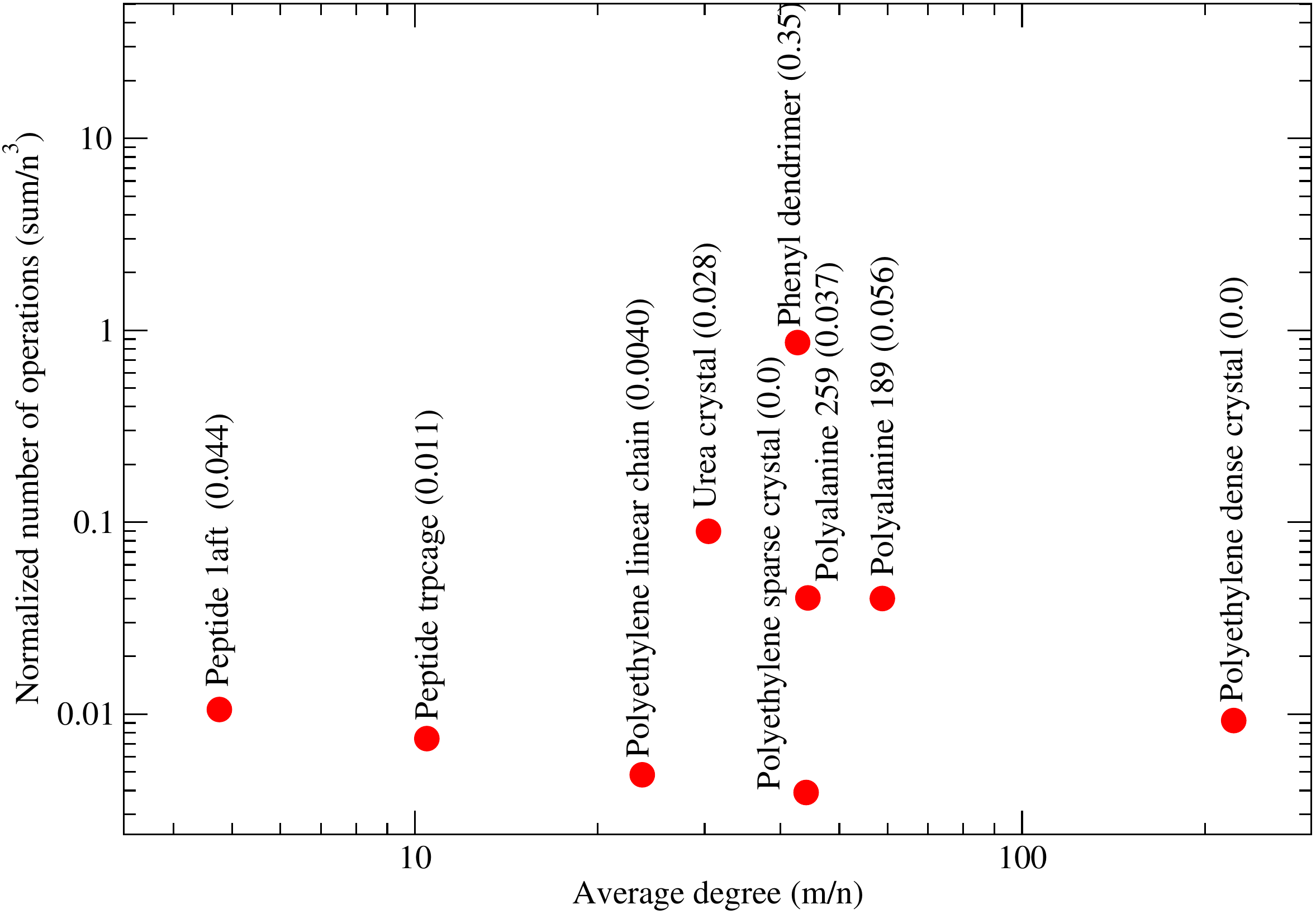}   
   \caption{Normed number of operations, defined as the value of \eqref{eq:sumOfCubes} divided by $n^3$, the complexity of dense matrix-matrix multiplication, where $n$ is the number of vertices
   found with \textit{METIS} and average degree $m/n$ for all the systems in Table \ref{tab:systems}.
   Fractions of $(\max-\min)/n$ are shown in brackets.
   Similar trends would be expected for the \textit{METIS} + SA algorithm.}
   \label{matrices}
  \end{center}
\end{figure}

It is interesting to note that, in general, there is no correlation between the average degree (connectivity)
and the normalized number of operations (NNO) obtained as a result of the partitioning (Figure~\ref{matrices}).
An example of the latter is observed for the polyethylene dense crystal case where the average degree is large and
the normalized number of operations stays low, similar to other molecules with smaller average degree.

For 1D systems such as the polyethylene linear chain and the polyethylene sparse crystal
(regular agglomerate of polyethylene chains aligned along a particular direction with a large chain-to-chain distance
\citep{TF9393500482}), the algorithm finds the lowest NNO.
The latter is probably due to the high order combined with the sparsity of the system.

Systems that can be viewed as regular 
(polyethylene linear chain, polyethylene sparse crystal, polyethylene dense crystal and urea crystal)
do not seem to have any advantage over the rest of the systems in terms of NNO obtained values.

Phenyl dendrimer is a special case where the molecule has an assotiated graph with a fractal-like structure
and as shown by the NNO values it represents a difficult case for graph partitioning.
The combination of \textit{METIS} and SA seems to yield a large improvement
as can be seen from Table \ref{tab:results}.

The case of proteins (solvated polyalanines) seems to have large NNO which could be
attributed to the large average of node degree when compared to peptides (small proteins). 
Furthermore,
there seems to be a correlation between the maximum and minimum partition norm difference (MMPN)
computed as $(\max-\min)/n$ and the NNO.
For the case of the dendrimer this is clear as it shows both large MMPN and NNO values.
The case of proteins are intermediate cases with both intermediate values of MMPN and NNO and finally,
sparse ordered systems such as polyethyene chains have both low values of MMPN and NNO.

\section{Conclusion}
\label{section_discussion}
This paper considers graph partitioning as a means to parallelize (and hence to speed up) the computation of the density matrix
in MD simulations.
The graphs to be partitioned are obtained from representations of the density matrices underlying physical systems.

The main focus of this article is to study
a version of the graph partitioning problem
in which partitions are minimized with respect to both
the parts' core sizes as well as the number of their neighbors in adjacent partitions (halos),
which has not been studied previously to the best of our knowledge.
We called this flavor of graph partitioning the \textit{CH-partitioning} (core-halo partitioning).

The contributions of this work are twofold.
First, it provides a mathematical description and justification of the CH-partitioning problem under consideration.
Importantly, we derive a sufficient
condition ensuring that a partitioned evaluation of a matrix polynomial yields the same result
as the evaluation of the original (unpartitioned) matrix.
Second, we investigate several approaches to compute CH-partitions and evaluate
them using three error criteria, namely the total computational effort, the maximal effort per processor,
and the overall computational runtime.
We pay special attention to a modified SA approach which is used to post-optimize partitions
computed by conventional graph partitioning algorithms.

Our experiments show that standard graph partitioning packages can be used to solve our flavor of the partitioning problem,
and that SA proves to be well suited to post-optimize partitions obtained through them.
As a recommendation, we conclude that \textit{METIS} with a SA post-processing step
is the best approach as it is significantly faster than competing methods
while giving the best results on average.

\subsubsection*{Acknowledgments}
The authors would like to thank Purnima Ghale and Matthew Kroonblawd
for their help with selecting meaningful physical datasets of real-world molecules and
Ben Bergen, 
Nick Bock, 
Marc Cawkwell, 
Christoph Junghans, 
Robert Pavel, 
Sergio Pino, 
Jerry Shi,
and Ping Yang for their feedback.

We acknowledge support from Office of Basic Energy Sciences
and the Laboratory Directed Research and Development program of Los Alamos National Laboratory.
This research has been supported at Los Alamos National Laboratory under the Department of Energy contract DE-AC52-06NA25396.

\appendix
\section{Proofs for Section \ref{section_theory}}
\label{appendix_proofs}

Let, for any symmetric matrix $X=\{x_{ij}\}$, $G(X)$ denote the graph, called \textit{sparsity graph} of $X$, that encodes the zero-nonzero structure of $X$. Specifically, for the $i$-th row (column) of $X$ there is a vertex $i$ in $G(X)$, and there is an edge between vertices $i$ and $j$ if and only if $x_{ij}\neq 0$.  
For any graph $G$, we denote by $V(G)$ and $E(G)$ the sets of the vertices and edges of $G$, respectively.

Let $A$ be a symmetric $n \times n$ matrix.
We first define a generalization of a matrix polynomial in formula \eqref{eq:expansion}.
We define a \textit{thresholded matrix polynomial} of degree $m=2^s$ to be a superposition of operators of the type
\begin{equation}\label{eq:poly}
P=P_1 \circ T_1 \circ \dots \circ P_s \circ T_s,
\end{equation}
where $P_i$ is a polynomial of degree 2 and $T_i$ is a thresholding operation. Formally, $T_i$ is a graph operator associated with a set of edges $E(T_i)$ such that, for any graph $I$, $T_i(I)$ is a graph with a vertex set $V(I)$ and an edge set $E(I)\setminus T_i$. 

Denote by $P(A)$ the application of a superpositioned operator $P$ as defined above,
consisting of polynomials $P_i$ and thresholding operations $T_i$,
to a matrix $A$ of appropriate dimension.
In the motivating SM-SP2 application, $A$ corresponds to the Hamiltonian and $P(A)$ corresponds to the density matrix.

Define the structure class  $\M(A)$ of $A$ as the set of all matrices $B$ such that $G(A)=G(B)$,
meaning that all matrices in this class have the same zero-nonzero structure as $A$.
Let $P$ be a thresholded matrix polynomial and $G=G(A)$. Define $\PP(G)$ as the minimal graph with the same vertices as $G$ such that, for any matrix $B\in \M(A)$ and any $v,w$ such that $P(B)|_{vw}\neq 0$, there is an edge $(v,w)\in E(P(G))$. 

Informally, $\PP(G)$ describes the worst case zero-nonzero structure of $P(A)$ that ignores the possibility of coincidental zeros resulting from cancellation (adding opposite-sign numbers). We assume that all diagonal elements of $A$ are non-zero and that no $E(T_i)$ contains a loop edge; therefore, there is a loop edge associated with each vertex of $G$ and $\PP(G)$.

For any graph $I$ and vertex $v$ of $I$, the \textit{neighborhood} of $v$ in $I$ is the set $N(v,I)=\{w\in V(I)~|~(v,w)\in E(I)\}$.
Let $H=\PP(G)$, $v$ be a vertex of $G$, and $H_{v}$ denote the subgraph of $H$ induced by $N(v,H)$.
For the following lemmas we assume that $T_i\cap E(H)=\emptyset$ for all $i$, i.e., none of the edges in $H=\PP(G)$ are thresholded.
We have the following properties.

\begin{lemma}\label{lem:neighb}
Let $v$ be a vertex of $G$. Then $N(v,\PP(G))=N(v,\PP(H_v))$.
\end{lemma}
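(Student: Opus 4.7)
The plan is to establish the two inclusions in $N(v,\PP(G))=N(v,\PP(H_v))$ separately; one is a bookkeeping check on vertex sets and the other is a short induction on the stages of $\PP$.

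For $N(v,\PP(H_v))\subseteq N(v,\PP(G))$, I would observe that both the polynomial stages $P_i$ and the thresholding stages $T_i$ preserve the vertex set of their argument, so $\PP(H_v)$ is a graph on $V(H_v)=N(v,H)=N(v,\PP(G))$. Consequently any neighbor of $v$ in $\PP(H_v)$ automatically lies in $N(v,\PP(G))$.

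For the reverse inclusion $N(v,\PP(G))\subseteq N(v,\PP(H_v))$, fix $w\in N(v,H)$. Then $(v,w)\in E(H)$ and $v,w\in V(H_v)$, so $(v,w)\in E(H_v)$; similarly the loop $(v,v)$ lies in $E(G)$ by the nonzero-diagonal assumption, survives every thresholding since no $E(T_i)$ contains a loop edge, and therefore lies in $E(H)$ and in $E(H_v)$. I would then induct on the number of operations from the sequence $P_1,T_1,\dots,P_s,T_s$ applied so far to $H_v$, maintaining the joint invariant that both $(v,w)$ and the loop $(v,v)$ are present in the current graph. At a squaring step $P_i$, the inductive hypothesis supplies the loop $(v,v)$ and the edge $(v,w)$, so taking $k=v$ in the condition ``$\exists k:(v,k),(k,w)\in E(\cdot)$'' shows that $(v,w)$ survives the squaring; the loop is handled identically. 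At a thresholding step $T_i$, the edge $(v,w)$ cannot be removed because $T_i\cap E(H)=\emptyset$ by the standing hypothesis, and $(v,v)$ cannot be removed by the loop assumption. Running the induction to the end yields $(v,w)\in E(\PP(H_v))$, i.e.\ $w\in N(v,\PP(H_v))$.

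The key conceptual step, and what I expect to be the main obstacle, is finding the right invariant: once one notices that the loop $(v,v)$ can serve as a universal ``common neighbor'' witnessing the survival of $(v,w)$ at every squaring stage, the two hypotheses (no thresholded loops and $T_i\cap E(H)=\emptyset$) combine to make the inductive step almost automatic. This avoids any need to unpack how $(v,w)$ was originally produced by $\PP$ acting on $G$, or to compare the intermediate graphs obtained after partial applications of $\PP$ to $G$ and to $H_v$ edge-by-edge.
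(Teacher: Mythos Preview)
Your proof is correct and follows essentially the same route as the paper's: one inclusion is the vertex-set observation that $V(\PP(H_v))=V(H_v)=N(v,H)$, and the other uses the presence of loops together with the hypothesis $T_i\cap E(H)=\emptyset$ to show that the edge $(v,w)\in E(H_v)$ persists through every stage of $\PP$. The paper compresses your explicit induction into a single sentence (``From $(v,w)\in E(H_v)$, the last relation, and the fact that all vertices of $H$ have loops, $(v,w)\in E(\PP(H_v))$''), but the underlying mechanism---loops supply the common neighbor at each squaring, and the two thresholding hypotheses protect $(v,w)$ and $(v,v)$ respectively---is identical.
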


\begin{proof}
First we prove that $N(v,\PP(G))\subseteq N(v,\PP(H_v))$. Let $w\in N(v,\PP(G))$. Then $(v,w)\in E(\PP(G))$ and hence $(v,w)\in E(H_v)\subseteq E(H)$. Since by assumption $T_i\cap E(H)=\emptyset$, $(v,w)\not\in T_i$ for all $i$. From $(v,w)\in E(H_v)$, the last relation, and the fact that all vertices of $H$ have loops, $(v,w)\in E(\PP(H_v))$. Hence, $w\in N(v,\PP(H_v))$.

Now we prove that $N(v,\PP(H_v))\subseteq N(v,\PP(G))$. Let $w\in N(v,\PP(H_v))$. Since $\PP(H_v)$ and $H_v$ have the same vertex sets, then $w\in N(v,H_v)$. Furthermore, since $H_v$ is a subgraph of $H$, $w\in N(v,H)=N(v,\PP(G))$.

\end{proof}

The lemma shows that $v$ has the same neighbors in $\PP(G)$ and $\PP(H_v)$, i.e., their corresponding matrices have nonzero entries in the same positions in the row (or column) corresponding to $v$. We will next strengthen that claim by showing that the corresponding nonzero entries contain equal values.

Let $X_{v}$ be the submatrix of $A$ defined by all rows and columns that correspond to vertices of $V(H_{v})$. We will call vertex $v$ the \emph{core} and the remaining vertices \textit{halo} of $V(H_{v})$. We define the set $\{V(H_{v})~|~v\in G\}$ a \emph{CH-partition} (from core-halo) of $G$. Note that, unlike other definitions of a partition used elsewhere, the vertex sets of CH-partitions (and, specifically, the halos) can be, and typically are, overlapping.

\begin{lemma}\label{lem:matrixValues}
For any $v\in V(G)$ and any $w\in N(v,\PP(G))$, the element of $P(A)$ corresponding to edge $(v,w)$ of $\PP(G)$ is equal to the element of $P(X_v)$ corresponding to edge $(v,w)$ of $\PP(H_v)$.
\end{lemma}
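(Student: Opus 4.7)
My plan is to prove this by induction on $s$, the number of polynomial--thresholding pairs composed in $P$, carrying a strengthened invariant on intermediate matrices. Define $A^{(0)}=A$ and $A^{(i)}=T_i(P_i(A^{(i-1)}))$, and analogously $X^{(0)}=X_v$ and $X^{(i)}=T_i(P_i(X^{(i-1)}))$, so that $P(A)=A^{(s)}$ and $P(X_v)=X^{(s)}$. The invariant I would maintain is that, after every step $i$, the entries $(A^{(i)})_{u,u'}$ and $(X^{(i)})_{u,u'}$ coincide whenever $u,u'\in V(H_v)$ and these entries are genuinely needed to produce the final $(v,w)$ output. The base case $i=0$ is immediate because $X_v$ is by construction the submatrix of $A$ indexed by $V(H_v)$.

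For the inductive step, the only nontrivial part of applying $P_i$ is the matrix square, since $P_i$ has degree $2$. Expanding, $(P_i(A^{(i-1)}))_{u,u'}=\sum_k (A^{(i-1)})_{u,k}(A^{(i-1)})_{k,u'}$ plus at most a linear term, and only those $k$ that are common neighbors of $u$ and $u'$ in the intermediate sparsity graph $G(A^{(i-1)})$ contribute nonzero summands. The crux is therefore to argue that every such contributing $k$ lies in $V(H_v)$, which I would do by applying a version of Lemma~\ref{lem:neighb} to each prefix polynomial $P^{(i)}=P_1\circ T_1\circ\dots\circ P_i\circ T_i$ to obtain $N(v,\PP^{(i)}(G))=N(v,\PP^{(i)}(H_v))$. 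Combined with the standing hypotheses that $T_j\cap E(H)=\emptyset$ and that every vertex carries a loop (so neighborhoods cannot shrink under squaring), this confines the ``light cone'' of the computation along the $v$-row to $V(H_v)$, and the two sums coincide term by term by the induction hypothesis. Thresholding is a purely entrywise operation applied identically in both computations, so it preserves the invariant. Specializing to $i=s$, $u=v$, $u'=w$ then yields the lemma.

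The delicate point, and where I expect the bookkeeping to get heaviest, is the control of the intermediate sparsity graphs $G(A^{(i)})$. These need not be subgraphs or supergraphs of $H=\PP(G)$: a squaring may introduce edges $(v,k)$ with $k\notin V(H_v)$ that are later killed by some $T_j$, while an edge of $G$ may be temporarily thresholded away and then re-emerge. The key use of the assumption $T_j\cap E(H)=\emptyset$, together with the presence of loops at every vertex, is to show that the relevant contributions along the $v$-indexed rows remain governed by $N(v,H)=V(H_v)$ at every intermediate stage, so that the localization argument propagates through all $s$ levels of the recursion. This is the step that demands the most care, since once this containment is secured the rest of the induction is a routine comparison of the two matrix-square expansions on the common index set $V(H_v)$.
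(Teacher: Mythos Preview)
Your approach is essentially the paper's: induction on $s$, expand the matrix square, use Lemma~\ref{lem:neighb} to confine the contributing summation indices to $V(H_v)$, and equate the two sums term by term via the inductive hypothesis. The paper's proof is terser---it simply asserts that ``the corresponding elements in the matrices $A'=P'(A)$ and $X'=P'(X)$ have equal values'' and then restricts the sum $A'^2(v,w)=\sum A'(v,u)A'(u,w)$ to $u$ with $(v,u),(u,w)\in E(\PP(G))$---whereas you make explicit both the strengthened invariant this presupposes and the delicate containment of the intermediate sparsity graphs needed to justify that restriction; but the skeleton is the same.
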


\begin{proof}
Let $m=2^s$ be the degree of $P$. We will prove the lemma by induction on $s$. Clearly, the claim is true for $s=0$ since the elements of both $A^1$ and $X^1$ are original elements of the matrix $A$. Assume the claim is true for $s-1$.
Define $P'=P_1 \circ T_1 \circ \dots \circ P_{s-1} \circ T_{s-1}$. By the inductive
assumption, the corresponding elements in he matrices $A'=P'(A)$ and $X'=P'(X)$ have equal values. We need to prove the same for the elements of $A'^2$ and $X'^{2}$.

Let $(v,w)\in E(\PP(G))$. By Lemma~\ref{lem:neighb}, $(v,w)\in E(\PP(H_v))$. For each vertex $u$ of $\PP(H_v)$ let $u'$ denote the corresponding row/column of $X$. We want to show that $P(A)(v,w)=P(X)(v',w')$.

By definition of matrix product, $A'^2(v,w)=\sum A'(v,u) A'(u,w)$, where the summation is over all $u$ such that $(v,u),(u,w)\in E(\PP(G))$. Similarly, $X'^2(v',w')=\sum X'(v',u')X'(u',w')$, where the summation is over the values of $u'$ corresponding to the values of $u$ from the previous formula, by Lemma~\ref{lem:neighb}. By the inductive assumption, $A'(v,u)=X'(v',u')$ and $A'(u,w)=X'(u',w')$, whence $A'^2(v,w)=X'^2(v',w')$.
Since by assumption $A'(v,w)=X'(v',w')$, then $P_s(A')(v,w)=P_s(X')(v',w')$, and hence
$P(A)(v,w)=(P_s \circ T_s)(A')(v,w)=(P_s \circ T_s)(X')(v',w')=P(X)(v',w')$.
\end{proof}

Lemma~\ref{lem:matrixValues} implies the following algorithm to compute $\PP(A)$ given we know its sparsity structure in $H_v$:

\begin{enumerate}[(i)]
  \item Construct a CH-partition $\Pi$ of $G$ into $n$ parts such that each part consists of a vertex (core) and its adjacent vertices in $H_v$ (halo);
  \item For the $i$-th part $\Pi_i$ of $\Pi$ that has core the $i$-th vertex of $G(A)$, construct a submatrix $A_i$ containing the rows and columns of $A$ corresponding to the vertices of $\Pi_i$;
  \item Compute $\PP(A_i)$ for all $i$;
  \item Define $\PP(A)$ as a matrix whose $i$-th row has nonzero elements corresponding to the $i$-th row of $\PP(A_i)$ (subject to appropriate reordering).
\end{enumerate}

Clearly, in many cases it will be advantageous to consider CH-partitions whose cores contain multiple vertices. We will next show that the above approach for CH-partitions with single-node cores can be generalized to the multi-node core case.

We will generalize the definitions of $N(v,\PP(G))$ and $N(v,\PP(H_v))$ for the case where vertex $v$ is replaced by a set $U$ of vertices of $G$. For any graph $I$, we define $N(U,I)=\bigcup_{v\in U}N(v,I)$. Furthermore, we define by $H_U$ the subgraph of $H$ induced by $N(U,H)$.

Suppose the sets $\{U_i~|~i=1,\dots,q\}$ are such that $\bigcup_{i}U_i=V(G(A))$ and $U_i\cap U_j=\emptyset$.
In this case we can define a CH-partition of $G=G(A)$ consisting of $q$ sets, where for each $i$,
$U_i$ is the core and $N(U_i,H)\setminus U_i$ is the halo of $\Pi_i$.

The following generalizations of Lemma~\ref{lem:neighb} and Lemma~\ref{lem:matrixValues} follow in a straightforward manner.

\begin{lemma}\label{lem:neighbSet}
Denote by $H_i$ the subgraph $\PP(H_{U_i})$ of $G$.
Let $v$ be a vertex of $U_i$. Then $N(v,\PP(G))=N(v,H_i)$.
\end{lemma}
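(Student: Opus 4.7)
The plan is to follow the argument structure of Lemma~\ref{lem:neighb} and prove the two inclusions $N(v,\PP(G))\subseteq N(v,H_i)$ and $N(v,H_i)\subseteq N(v,\PP(G))$ separately, with the single-vertex core $\{v\}$ of that lemma replaced by the multi-vertex core $U_i$. For the forward inclusion I would take $w\in N(v,\PP(G))=N(v,H)$; since $v\in U_i$, the neighbor $w$ lies in $N(v,H)\subseteq N(U_i,H)=V(H_{U_i})$, and $v$ itself lies in $V(H_{U_i})$ by virtue of its loop. Hence both endpoints of the edge $(v,w)\in E(H)$ lie in $V(H_{U_i})$, and because $H_{U_i}$ is the induced subgraph of $H$ on that vertex set, $(v,w)\in E(H_{U_i})$. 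The assumption $T_j\cap E(H)=\emptyset$ for all $j$ together with the loop-preservation step used in the proof of Lemma~\ref{lem:neighb} then guarantees that $(v,w)$ survives every intermediate $P_j\circ T_j$ operation applied to $H_{U_i}$, so $(v,w)\in E(\PP(H_{U_i}))=E(H_i)$ and $w\in N(v,H_i)$.

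For the reverse inclusion the direct vertex-set argument from Lemma~\ref{lem:neighb}, namely ``$w\in V(\PP(H_v))=V(H_v)=N(v,H)$'', is no longer sufficient, because $V(H_{U_i})=N(U_i,H)$ is in general strictly larger than $N(v,H)$, so knowing that $w$ is a vertex of $\PP(H_{U_i})$ only places $w$ in $N(U_i,H)$ rather than in $N(v,H)$. My plan is to exploit the identity $N(v,H_{U_i})=N(v,H)$, which holds because $H_{U_i}$ is the induced subgraph of $H$ on a vertex set already containing $N(v,H)$, together with Lemma~\ref{lem:neighb} applied at $v$ alone (giving $N(v,\PP(H_v))=N(v,\PP(G))$). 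This reduces the reverse inclusion to showing that the extra vertices of $V(H_{U_i})\setminus V(H_v)$ and the induced edges they bring in do not produce new neighbors of $v$ after $\PP$ is applied to $H_{U_i}$.

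That final reduction is where I expect the real obstacle to lie, because the enlarged vertex set allows length-two paths leaving $v$, entering $V(H_v)$, and then leaving it again through vertices that are unavailable in $H_v$. My plan to address this is an induction on the number $s$ of $P_j\circ T_j$ factors in $P$, maintaining the invariant that at each intermediate stage the edges incident to $v$ coincide in the evolutions starting from $H_v$ and from $H_{U_i}$. The thresholding steps are handled by $T_j\cap E(H)=\emptyset$; the squaring steps rely on the observation that any new edge $(v,w')$ must arise through an intermediate vertex $u$ already adjacent to $v$ at the previous stage, which by the inductive invariant belongs to $V(H_v)$ and has the same locally relevant incident edges in both evolutions. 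This parallels the inductive bookkeeping used in the proof of Lemma~\ref{lem:matrixValues}, and together with the forward direction it should complete the equality.
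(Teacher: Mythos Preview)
The paper gives no proof of Lemma~\ref{lem:neighbSet}; it merely asserts that it and Lemma~\ref{lem:matrixValuesMulti} ``follow in a straightforward manner'' from the single-vertex versions. Your forward inclusion is fine and matches the argument of Lemma~\ref{lem:neighb}, and you are right that the vertex-set trick for the reverse inclusion no longer suffices because $V(H_{U_i})=N(U_i,H)$ can strictly contain $N(v,H)$.

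The genuine gap is in your inductive repair of the reverse inclusion. Your proposed invariant controls only the edges incident to $v$ in the two evolutions. But to control which \emph{new} neighbours $v$ acquires after a squaring step you must know the edges incident to each current neighbour $u$ of $v$, not just those at $v$. Your invariant places $u$ in $V(H_v)$, but it says nothing about $u$'s other incident edges, and these need not agree in the two evolutions. Concretely, take $G$ to be the path $a\text{--}b\text{--}c\text{--}d\text{--}e$ with loops at every vertex, $P(X)=X^{2}$, and $T_1=\emptyset$ (so the hypothesis $T_j\cap E(H)=\emptyset$ holds trivially). Then $H=\PP(G)$ is the square of the path, and for $U_i=\{a,e\}$ one has $N(U_i,H)=V(H)$, hence $H_{U_i}=H$. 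Squaring $H_{U_i}$ produces the edge $(a,e)$ via the intermediate vertex $c\in V(H_a)$, because $(c,e)\in E(H_{U_i})$ while $(c,e)\notin E(H_a)$. Thus $e\in N(a,\PP(H_{U_i}))\setminus N(a,\PP(G))$: your invariant fails after one step, and in fact the equality claimed in the lemma fails on this example. Whatever ``straightforward'' extension the authors have in mind must rest on an additional hypothesis or a different reading of $H_i$ (the phrase ``subgraph $\PP(H_{U_i})$ of $G$'' in the statement hints at this); your inductive scheme cannot close the gap as the statement stands.
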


Denote by $A_{U_i}$ the submatrix of $A$ consisting of all rows and columns that correspond to vertices of $V(H_{U_i})$.
The following main result of this section shows that $P(A)$ can be computed on submatrices of
the Hamiltonian
and hence justifies the parallelized evaluation of a matrix polynomial.

\begin{lemma}\label{lem:matrixValuesMulti}
For any $v\in U_i$ and any $w\in N(v,\PP(G))$, the element of $P(A)$ corresponding to edge $(v,w)$ of $\PP(G)$ is equal to the element of $P(A_{U_i})$ corresponding to edge $(v,w)$ of $H_i$.
\end{lemma}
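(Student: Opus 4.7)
The plan is to mirror the induction argument used for Lemma~\ref{lem:matrixValues}, but with the multi-vertex core generalization of the neighborhood identity, namely Lemma~\ref{lem:neighbSet}, replacing its single-vertex counterpart Lemma~\ref{lem:neighb}. I would induct on $s$, where $m=2^s$ is the degree of the thresholded matrix polynomial $P$.

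For the base case $s=0$, the matrices $A$ and $A_{U_i}$ are literally equal on every row/column pair lying inside $V(H_{U_i})$. Since $v\in U_i\subseteq V(H_{U_i})$ and, by Lemma~\ref{lem:neighbSet}, $w\in N(v,\PP(G))=N(v,H_i)\subseteq V(H_{U_i})$, the entry at $(v,w)$ agrees trivially. For the inductive step, factor $P=P'\circ P_s\circ T_s$ with $P'=P_1\circ T_1\circ\dots\circ P_{s-1}\circ T_{s-1}$ of degree $2^{s-1}$, and let $A'=P'(A)$ and $X'=P'(A_{U_i})$. The inductive hypothesis says that $A'(v,w)=X'(v',w')$ for every $v\in U_i$ and every $w\in N(v,\PP'(G))$, where $\PP'$ is the worst-case sparsity operator associated with $P'$ and primed labels denote the row/column of the corresponding vertex in $A_{U_i}$. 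Since $P_s$ is a degree-$2$ polynomial with an identity/linear part that trivially preserves entrywise equality, it suffices to track the squared term. Writing $(A'^2)(v,w)=\sum_u A'(v,u)A'(u,w)$ over $u$ with $(v,u),(u,w)\in E(\PP'(G))$, Lemma~\ref{lem:neighbSet} (applied to $\PP'$) shows that exactly the same set of intermediate vertices $u$ indexes the expansion of $(X'^2)(v',w')$, and each factor pair matches by the inductive hypothesis. Finally, applying the thresholding operator $T_s$ removes the same edges of $E(T_s)$ from both computed matrices, so the equality survives into $P(A)(v,w)=P(A_{U_i})(v',w')$.

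The main obstacle I expect is purely bookkeeping: I must ensure that at every intermediate stage the set of indices appearing in the matrix-product sum for the submatrix $A_{U_i}$ coincides, under the obvious identification, with the set of indices appearing in the sum for the full matrix $A$. This is a structural statement about sparsity, not about numerical values, so cancellations are irrelevant; the key fact is that the worst-case sparsity $\PP'(G)$ at any intermediate stage is contained in $\PP(G)=H$, and hence all relevant intermediate neighbors $u$ of a core vertex $v\in U_i$ lie in $V(H_{U_i})$ and are therefore represented as rows/columns of $A_{U_i}$. Lemma~\ref{lem:neighbSet}, together with the assumption that no $E(T_j)$ meets $E(H)$, delivers exactly this containment and the bijective identification needed to carry the induction through cleanly.
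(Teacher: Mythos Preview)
Your proposal is correct and takes exactly the approach the paper intends: the paper does not spell out a proof of Lemma~\ref{lem:matrixValuesMulti} but simply states that it ``follows in a straightforward manner'' as the multi-vertex analogue of Lemma~\ref{lem:matrixValues}, and your induction on $s$ with Lemma~\ref{lem:neighbSet} substituted for Lemma~\ref{lem:neighb} is precisely that straightforward generalization.
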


\section{Pseudocode of the Simulated Annealing Algorithm}
\label{section_details_SA}

\begin{algorithm}[Simulated Annealing]
\label{algorithm_sa}~
\begin{enumerate}
  \setlength{\parskip}{0pt}
  \item Input: Graph G, number of iterations $N$, initial partitioning $\Pi$
  \item Output: Updated partitioning $\Pi$
  \item Select a temperature function $t(i)=1/i$;
  \item For $i=1$ to $N$:
  \begin{enumerate}
    \item Select random partition $\pi$ in $\Pi$ and a random  core-halo edge $(v,w)$ in $\pi$;
    \item In a copy $\Pi'$ of $\Pi$, make $w$ a core vertex of $\pi$ and update the halo of $\pi$;
    \item Compute the values $S$ and $S'$ of \eqref{eq:sumOfCubes} for $\Pi$ and $\Pi'$, respectively, and set $\Delta=S'-S$;
    \item Compute $p = \exp \left(-\Delta/t(i)  \right)$;
    \item Set $\Pi=\Pi'$ with probability $\min(1,p)$;
    \item If a stopping criterion is satisfied exit loop;
  \end{enumerate}
  \setlength{\parskip}{0pt}
  \item Output partitioning $\Pi$;
\end{enumerate}
\end{algorithm}

\section{Experimental Study Results}
\label{app:experiments}
Table~\ref{tab:results} provides the raw data of the various experiments described in Section~\ref{subsection_assessment_all}.
Figure~\ref{fig:maxChart} shows how different methods compare with respect to the maximum size of CH-partitions obtained for the various test systems.

The table visualizes the performance of the six partitioning schemes given in column ``methods'',
measured in four different ways:
Column $3$ (``sum'') shows the sum of cubes criterion (\ref{eq:sumOfCubes}) which is a measure of the total
matrix multiplication cost of a step of the SP2 algorithm.
Since the matrix multiplication consumes most of the computation time in SP2, this is measure of the computational effort of SP2.
The smallest and largest size of any CH-partition given in columns $4$ (``min'') and $5$ (``max'')
is a measure of the spread of partition sizes created by the algorithm: ideally, all partitions should be of
roughly equal size. If this is not the case, the nodes or processors in a parallel implementation of the SP2 algorithm
will have very unequal computational loads which is undesirable in practice.
The last column (``time'') shows the average computation time for each partitioning algorithm measured in seconds.

\begin{table*}
\scriptsize
\caption{Different partition schemes applied to various test systems. 
  The first column is the name of the test system, 
  the second column is algorithmic method used, and the next columns are the corresponding result values 
  (sum of cubes (\ref{eq:sumOfCubes}), the sum of size and halo for the smallest CH-partition (min), biggest CH-partition (max), and
   the overall time to compute the CH-partitioning respectively).
For each test system, the number of vertices $n$, the number of edges $m$ and the number of partitions $p$ are displayed.
\label{tab:results}}
\begin{center}
\begin{tabular}{|l||l|r|r|r|r|}
Test system & method & sum & min & max & time [s]\\
\hline
polyethylene dense crystal &METIS & 57,982,058,496 & 1536 & 1536 &  0.267175 \\ 
n = 18432 & METIS + SA & 51,856,752,364 & 976 & 1536 & 0.347209 \\ 
m = 4112189 & HMETIS & 7,126,357,377,024 & 3840 & 9984 & 141.426 \\ 
p = 16 & HMETIS + SA & 1,362,943,612,944 & 2520 & 5814 & 141.79 \\ 
& KaHIP & 32,614,907,904    &     768    &    1536   &  0.7 \\ 
& KaHIP + SA &  32,614,907,904  &  768   & 1536  &  0.73 \\
\hline

polyethylene sparse crystal &METIS & 24,461,180,928 & 1152 & 1152 &  0.024942 \\ 
n = 18432 & METIS + SA & 24,461,180,928 & 1152 & 1152 & 0.030508 \\ 
m = 812343 & HMETIS & 195,689,447,424 & 2304 & 2304 & 55.9726 \\ 
p = 16 & HMETIS + SA & 170,056,587,295 & 2013 & 2299 & 55.9943 \\ 
& KaHIP & 24,461,180,928     &   1152    &    1152 &   0.07\\ 
& KaHIP + SA &  24,461,180,928   &     1152    &    1152 &   0.08 \\
\hline

phenyl dendrimer &METIS & 336,049,081 & 150 & 409 &  0.13482 \\ 
n = 730 & METIS + SA & 146,550,740 & 0 & 382 & 0.14877 \\ 
m = 31147 & HMETIS & 177,436,462 & 135 & 358 & 1.578 \\ 
p = 16 & HMETIS + SA & 118,409,940 & 0 & 358 & 1.59436 \\ 
& KaHIP & 231,550,645        &  55      &   381  &  1.72\\ 
& KaHIP + SA &  116,248,715    &       0      &   324 &   1.74 \\
\hline

polyalanine 189 &METIS & 1,305,573,505,507 & 3358 & 5145 &  0.332091 \\ 
n = 31941 & METIS + SA & 1,297,206,329,828 & 3362 & 5093 & 0.372463 \\ 
m = 1879751 & HMETIS & 1,402,737,703,273 & 3762 & 5124 & 418.229 \\ 
p = 16 & HMETIS + SA & 1,393,115,476,879 & 3765 & 5119 & 418.28 \\ 
& KaHIP &  1,649,301,823,304       &   12 &       6030  & 18.35 \\ 
& KaHIP + SA & 1,624,800,725,049     &     12    &    5983 &  18.39  \\
\hline

peptide 1aft &METIS & 603,251 & 24 & 41 &  0.004755 \\ 
n = 384 & METIS + SA & 572,281 & 24 & 41 & 0.005007 \\ 
m = 1833 & HMETIS & 562,601 & 24 & 40 & 0.820561 \\ 
p = 16 & HMETIS + SA & 538,345 & 24 & 42 & 0.820771 \\ 
&  KaHIP & 575,978      &    11      &    44  &  0.08  \\ 
& KaHIP + SA &  575,978      &    11       &   44  &  0.08 \\
\hline

polyethylene chain 1024 &METIS & 8,961,763,376 & 800 & 848 &  0.01513 \\ 
n = 12288 & METIS + SA & 8,961,763,376 & 800 & 848 & 0.017951 \\ 
m = 290816 & HMETIS & 8,951,619,584 & 824 & 824 & 27.3297 \\ 
p = 16 & HMETIS + SA & 8,951,619,584 & 824 & 824 & 27.3332 \\ 
& KaHIP & 9,037,266,968     &    782     &    875    & 0.73 \\ 
& KaHIP + SA & 9,000,224,048     &    782      &   872  &  0.74 \\
\hline

polyalanine 289 &METIS & 2,816,765,783,803 & 4591 & 6102 &  0.366308 \\ 
n = 41185 & METIS + SA & 2,816,141,689,603 & 4591 & 6093 & 0.399265 \\ 
m = 1827256 & HMETIS & 3,694,884,690,563 & 5733 & 6828 & 710.084 \\ 
p = 16 & HMETIS + SA & 3,681,874,557,307 & 5733 & 6830 & 710.128 \\ 
& KaHIP & 4,347,865,055,912      &    52  &      8955 &   43.9 \\ 
& KaHIP + SA & 4,309,969,305,955    &      52    &    8907  & 43.94 \\ \hline

peptide trp cage &METIS & 35,742,302,607 & 1228 & 1414 &  0.025795 \\ 
n = 16863 & METIS + SA & 35,740,265,780 & 1228 & 1414 & 0.029837 \\ 
m = 176300 & HMETIS & 35,428,817,730 & 1214 & 1472 & 31.0506 \\ 
p = 16 & HMETIS + SA & 35,237,003,004 & 1214 & 1472 & 31.0545 \\ 
& KaHIP & 43,551,196,287       & 515    &    1898   &  2.81\\ 
& KaHIP + SA & 43,388,946,192      &   536      &  1896  &  2.81 \\
\hline

urea crystal &METIS & 4,126,744,977 & 608 & 708 &  0.047032 \\ 
n = 3584 & METIS + SA & 4,126,744,977 & 608 & 708 & 0.057645 \\ 
m = 109067 & HMETIS & 5,913,680,136 & 643 & 811 & 15.2321 \\ 
p = 16 & HMETIS + SA & 5,194,749,106 & 604 & 785 & 15.2443 \\ 
& KaHIP &  3,907,671,473 & 622 & 630 &  1.05 \\ 
& KaHIP + SA & 3,907,671,473     &    622      &   630  &  1.05 \\
\hline
\end{tabular}
\end{center}
\end{table*}

\begin{figure}
  \centering
  \includegraphics[width=0.7\textwidth]{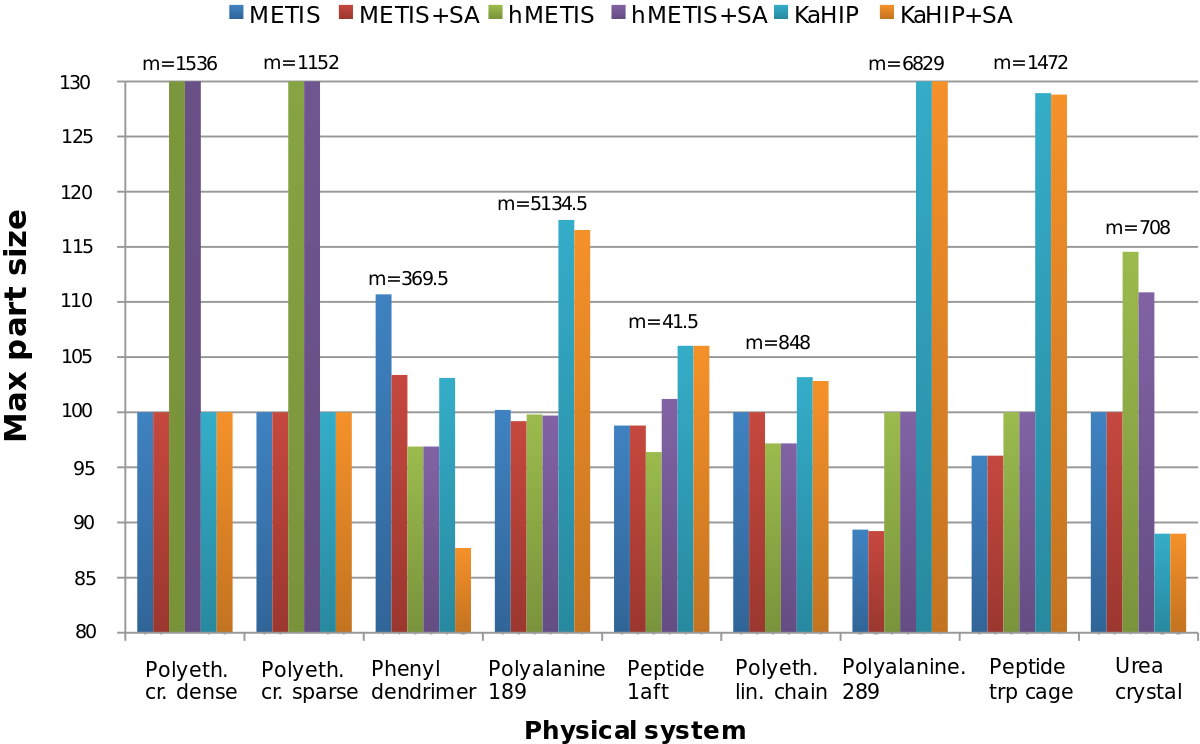}
  \caption{Performance of various methods with respect to the maximal size of the CH-partitions. \label{fig:maxChart}}
\end{figure}

\end{document}